\documentclass[aps,jmp,reprint,onecolumn]{revtex4-2}

\usepackage[utf8]{inputenc}
\usepackage{amsmath,amssymb,amsfonts,latexsym}
\usepackage{amsthm}
\usepackage{graphicx}
\usepackage{hyperref}

\newtheorem{definition}{Definition}
\newtheorem{theorem}{Theorem}
\newtheorem{lemma}{Lemma}
\newtheorem{corollary}{Corollary}
\newtheorem{remark}{Remark}

\newcommand{\ket}[1]{\lvert #1 \rangle}
\newcommand{\bra}[1]{\langle #1 \rvert}
\newcommand{\braket}[2]{\langle #1 \vert #2 \rangle}
\newcommand{\ketbra}[2]{\lvert #1 \rangle\langle #2 \rvert}

\begin{document}

\title{\textbf{Nested Integral Generator Theorem: \\ From Operator Tautologies to Families of Exact Integral Identities}}

\author{Ghasem Asadi Cordshooli} 
\email{ghascor@iau.ac.ir}
\affiliation{Department of Physics, Yadegar-e-Imam Khomeini Branch (Y.I.C.), \\ Islamic Azad University, Tehran, Iran}

\date{6 July 2026 (revised)}

\begin{abstract}
Inserting resolutions of the identity is a standard technique for representing states and operators throughout quantum theory, quantum field theory, and related areas of mathematical physics. This paper elevates this procedure to a rigorous, representation-independent framework through the Nested Integral Generator Theorem. Starting from operator tautologies, the theorem systematically generates exact multi-fold integral identities by successive insertions of continuous resolutions of the identity followed by projection onto arbitrary orthonormal basis vectors. The resulting construction applies to arbitrary finite compositions of closed operators acting on arbitrary target states and establishes a general mapping from operator equalities to families of exact integral identities. Using the theory of vector-valued integration, explicit and verifiable sufficient conditions are derived under which inner products may be interchanged rigorously with Bochner integrals, thereby placing a step that is often left implicit in the physics literature on a firm mathematical foundation. As an immediate consequence, the theorem yields exact integral representations for individual operator functions. Its scope is illustrated through elementary, polynomial, and analytic single-mode operators, as well as single- and two-mode Gaussian unitaries, including squeezing and beam splitting. The framework is further applied to two nontrivial examples beyond standard Gaussian calculations: an exact integral representation of a Kerr-squeezed coherent-state overlap and exact Fock-basis matrix elements for a composite two-mode Gaussian network expressed in terms of bivariate Hermite polynomials. These examples demonstrate the theorem's ability to generate exact analytical expressions for composite operator structures arising in continuous-variable quantum optics, photonic quantum information, and related areas of mathematical physics.

\end{abstract}
\maketitle

\section{Introduction}
The resolution of the identity, formulated as an integral or a sum over a complete set of states in a Hilbert space $\mathcal{H}$, constitutes a cornerstone of quantum mechanics and mathematical physics. While its conceptual roots trace back to the foundational formulation of quantum mechanics, the systematic development of continuous representations---most notably via coherent states---elevated this tool to a central position in algebraic and analytic mathematical physics.

The earliest precursor of coherent-state theory can be traced to Schr\"odinger's construction of minimum-uncertainty wave packets in 1926 \cite{Schrodinger1926}. However, the modern mathematical framework of continuous representations was established in the early 1960s. In a seminal series of papers in the \textit{Journal of Mathematical Physics}, Klauder formulated the ``continuous-representation theory,'' defining overcomplete families of states that admit a resolution of unity and utilizing them to construct path integrals rigorously \cite{Klauder1963, Klauder1963_2}. Klauder's minimal prescriptions provided the axiomatic foundation for subsequent generalizations of coherent states across various symmetry groups \cite{Perelomov1986}.

Concurrently, Glauber developed the quantum theory of optical coherence, demonstrating the utility of these states as eigenstates of the non-Hermitian annihilation operator \cite{Glauber1963}, while Sudarshan established the diagonal $P$-representation, establishing a mapping between classical probability distributions and quantum density operators \cite{Sudarshan1963}. These pioneering contributions, alongside early mathematical characterizations of the underlying Fock spaces by Bargmann \cite{Bargmann1961}, demonstrated that continuous resolutions of the identity provide a powerful bridge between operator algebras and phase-space complex analysis. A full century after Schr\"odinger's original construction, the continued vitality of this program is documented in a recent historical and constructive overview of coherent-state theory \cite{Popov2026}.

\subsection{The Resolution of the Identity as a Recurring Computational Step}

In its general form, a continuous resolution of the identity on a separable Hilbert space $\mathcal{H}$ is expressed via a positive operator-valued measure or a set of overcomplete vectors $\{\ket{\gamma}\}$ labeled by a measure space $(\Omega, \mu)$
\begin{equation}
I = \int_{\Omega} d\mu(\gamma)\, \ketbra{\gamma}{\gamma}.
\end{equation}
As a routine analytical tool, this relation is widely used to evaluate matrix elements, compute traces, or derive path integrals by projecting quantum states onto a specific functional basis. The resulting integrals---frequently Gaussian or hypergeometric in nature---are then evaluated on a case-by-case basis using specialized integration techniques, most systematically in the integration-within-ordered-products (IWOP) program \cite{Fan2003, Fan1995, FanYuan2010}.

When dealing with highly nonlinear, multimode, or nested operator functions, traditional problem-specific methodologies lead to formidable computational challenges, often requiring intricate expansions over specialized combinatorial identities or multidimensional complex contours. The purpose of the present work is not to introduce a new computational primitive---the insertion step itself is precisely what is traditionally used throughout the coherent-state literature---but rather to:
\begin{enumerate}
    \item[(i)] formulate this procedure as a representation-independent theorem derived from the fundamental operator tautology $\mathcal{F}_N \cdots \mathcal{F}_1\ket{\psi} = \mathcal{F}_N \cdots \mathcal{F}_1\ket{\psi}$, which holds for arbitrary resolutions of the identity, operator functions, and target states, independently of the specific realization of $\mathcal H$;
    \item[(ii)] replace the usual tacit assumption that the relevant vector-valued integral may be exchanged with the bra-projection by an explicit, verifiable sufficient condition; and
    \item[(iii)] show that the same insertion step, applied in a nested manner between the factors of an operator product rather than only at the very end, generates exact integral identities for compound operators whose direct evaluation would otherwise be prohibitive.
\end{enumerate}
For instance, well-known integral evaluations, such as the complex Gaussian moment 
\begin{equation}
\int_{\mathbb{C}} d^2\beta \, e^{-|\beta|^2+\beta^*\alpha}\beta^n(\beta^*)^m = \pi\frac{n!}{(n-m)!}\alpha^{n-m},
\end{equation}
emerge here as direct projections of elementary operator identities. More importantly, the nested generator theorem extends this mechanism systematically to compound, non-Gaussian operator combinations.

\subsection{Structural Context and Relation to Existing Work}

The literature abounds with integral identities derived for particular quantum-optical systems or specific group representations \cite{KlauderSudarshan1968, AliAntoineGazeau2014}, and the IWOP program of Fan and collaborators has, since the 1990s, systematically converted ket--bra projectors into ordered products of ladder operators to obtain such identities across many symmetry sectors \cite{Fan2003,Fan1995,FanYuan2010}. Recent work continues to construct new resolutions of the identity for specific symmetry groups and physical settings, such as Euclidean-symmetric waveguide arrays \cite{GuerreroLopezRuiz2022} and coherent states on curved configuration spaces \cite{Hall2002,Rabeie2011}, and to examine the mathematical completeness of generalized coherent-state families \cite{Dey2017}. Operator expansions under nonlinear canonical transformations have been studied by Brodlie \cite{Brodlie2004}, and related integral inequalities for entire functions in the context of the Segal--Bargmann transform were developed by Carlen \cite{Carlen1991}.

Despite this extensive literature, the resolution of the identity has remained predominantly an operational tool rather than the subject of a general structural framework. What appears to be missing is not the insertion technique itself, but its formulation as a representation-independent theorem, together with explicit, verifiable analytic hypotheses that justify the underlying interchange of operations, and a systematic extension of the technique to operator compositions via nested, sequential insertion. This is the modest but concrete gap the present paper aims to close.

\subsection{Contribution and Framework}
\label{sec:contribution}

The central result of this work is the formulation of the \textbf{Nested Integral Generator Theorem}. For any finite chain of well-defined operator functions $\mathcal{F}_k$ acting on a target state $\ket{\psi}\in\mathcal{H}$, and for any sequence of resolutions of the identity satisfying explicit Bochner-integrability conditions, the theorem establishes a systematic mapping to an infinite family of exact multi-fold integral identities indexed by the discrete basis vectors of the Hilbert space. This framework composes naturally: inserting the resolutions of the identity sequentially between successive operator factors yields exact integral identities for the compound operator directly from those of its constituents, without requiring a fresh ad hoc derivation. Under a single operator insertion, this setup collapses to a fundamental single-insertion corollary.

While the conceptual foundations of using the resolution of identity as an active generator of mathematical identities---focusing on its pedagogical value, simple physical examples, and the asymptotic recovery of the Dirac delta---were introduced in recent work \cite{Asadi2026AJP}, the present paper addresses a fundamentally deeper and more rigorous mathematical question. Here, the general, representation-independent theorem is established, the explicit and verifiable analytic hypotheses (based on vector-valued integration) that guarantee its validity are formulated, and its powerful nested multi-operator extension is constructed. Within this unified framework, the formalism possesses three principal characteristics, which are mirrored in Sec.~\ref{sec:factory} at the level of the theorem itself:

\begin{itemize}
    \item \textbf{Universality:} The framework is independent of the particular realization of the Hilbert space, the choice of operator function, the target state, and the basis. It applies universally to any resolution of the identity satisfying the stated integrability conditions.
    
    \item \textbf{Rigorous Justification:} The explicit evaluation of a matrix-integration kernel is replaced by a single insertion-and-projection step. Its validity is rigorously controlled by explicit, verifiable analytic hypotheses (specifically, Assumption~4 and its analogue, Assumption~5, for compositions) governing the interchange between vector-valued integration and bra projection, thereby replacing the customary implicit exchange assumption.
    
    \item \textbf{Systematic Generation:} Iterating the insertion step sequentially across a product of operators converts a chain of individually known coherent-state symbols into an exact integral identity for their composition, including compound operators for which no direct closed-form expression is otherwise available.
\end{itemize}

The theorem is illustrated on elementary, polynomial, and analytic single-mode operators, which are consolidated into a single derivation rather than repeated case by case, as well as on single- and two-mode Gaussian unitaries---specifically, squeezing and beam-splitting operations. Notably, the two-mode squeezing calculation is carried out to completion, including a vacuum-limit consistency check against the known two-mode squeezed-vacuum photon-number distribution. These examples culminate in two non-trivial applications beyond routine Gaussian evaluation: an exact integral representation of a Kerr-squeezed coherent-state overlap (Sec.~\ref{sec:kerr}), which has hitherto defied direct closed-form evaluation and is connected to the experimentally observed single-photon Kerr revival phenomenon \cite{Kirchmair2013,PuriBoutinBlais2017}; and the exact Fock-basis matrix elements of a two-mode squeezer followed by a beam splitter (Sec.~\ref{sec:composite}). These examples demonstrate the framework's capacity to resolve complex structures arising in continuous-variable photonic quantum computing and Gaussian boson sampling \cite{Madsen2022}.

This paper is organized as follows. Sec.~\ref{sec:framework} defines the mathematical framework. Sec.~\ref{sec:assumptions} states the core assumptions, including the revised, verifiable integrability conditions that replace the earlier tacit exchange hypothesis. Sec.~\ref{sec:factory} presents and proves the Nested Integral Generator Theorem together with its single-insertion corollary. Sec.~\ref{sec:examples} works through concrete examples in increasing order of structural complexity. Sec.~\ref{sec:discussion} discusses the scope, limitations, and prospective applications of the framework, including its natural connection to iterative constructions.

%%%%%%%%%%%%%%%%%%%%%%%%%%%%%%%%%%%%%%%%%%%%%%%%%%%%%%%%%%%%%%%%%%%%%%
\section{Definitions}
\label{sec:framework}

\begin{definition}[Resolution of the Identity]
Let $\mathcal H$ be a separable Hilbert space. A continuous family $\{\ket{\gamma}\}_{\gamma\in\Omega}$ with $\|\ket{\gamma}\|=1$ is said to resolve the identity if
\begin{equation}
I=\int_{\Omega} d\mu(\gamma)\,\ketbra{\gamma}{\gamma},
\label{ROI}
\end{equation}
where $d\mu(\gamma)$ is a positive measure on $\Omega$, and the integral is understood weakly
\[
\langle\phi|I|\chi\rangle
=
\int_\Omega d\mu(\gamma)\,
\langle\phi|\gamma\rangle
\langle\gamma|\chi\rangle,
\qquad
\forall\,\phi,\chi\in\mathcal H.
\]
\end{definition}

\begin{definition}[Chain of Linear Operators]
Let the ordered product of operators be denoted by $\prod_{k=1}^N \mathcal{F}_k \equiv \mathcal{F}_N \cdots \mathcal{F}_1$, where each constituent operator $\mathcal{F}_k$ is a densely defined, closed linear operator on a separable Hilbert space $\mathcal{H}$
\begin{equation}
\mathcal{F}_k: \mathcal{D}(\mathcal{F}_k) \subseteq \mathcal{H} \longrightarrow \mathcal{H}.
\end{equation}
When $N=1$, this product collapses to a single operator, denoted simply by $\mathcal{F}$.
\label{Domain}
\end{definition}
\begin{remark}[Typical Operator Functions]
The present framework includes, in particular, normal-ordered polynomials, analytic operator functions, operator products, powers, and convergent operator series.
\end{remark}

\begin{remark}[Nested Operator Tautology]
\label{rem:tautology}
For every chain of densely defined operators $\mathcal F_1,\ldots,\mathcal F_N$ and every $\ket{\psi}\in \mathcal D(\mathcal F_N\cdots\mathcal F_1)$, the identity
\begin{equation}
\mathcal F_N\cdots\mathcal F_1\ket{\psi}
=
\mathcal F_N\cdots\mathcal F_1\ket{\psi}
\label{Tautology}
\end{equation}
holds trivially. Here, we do not refer to ``telescoping'' in the traditional mathematical sense of series cancellation, but rather as the systematic, sequential insertion of the identity operator \emph{between} successive factors of an operator product to obtain nested integration kernels. The conceptual core of this work lies in:
\begin{enumerate}
\item[(i)] the exact scalar identities obtained by inserting a resolution of the identity into the right-hand side of the tautological equality and projecting onto an arbitrary orthonormal basis (the single-insertion case of Theorem~\ref{thm:telescope}); and
\item[(ii)] the systematic multi-layered insertion of the identity operator between successive factors of an operator product, which generates multi-fold integral identities for compound operators (Theorem~\ref{thm:telescope}).
\end{enumerate}
\end{remark}

%%%%%%%%%%%%%%%%%%%%%%%%%%%%%%%%%%%%%%%%%%%%%%%%%%%%%%%%%%%%%%%%%%%%%%
\section{Assumptions}
\label{sec:assumptions}

Throughout this paper, the following are assumed:

\begin{enumerate}
\item The Hilbert space $\mathcal H$ is separable, with a fixed orthonormal basis $\{\ket n\}_{n=0}^\infty$.

\item The continuous resolution of the identity (\ref{ROI}) holds, with $\|\ket\gamma\|=1$ for $\mu$-almost every $\gamma\in\Omega$.

\item The target vector satisfies $\ket\psi\in\mathcal D(\mathcal F)$ and $\mathcal F\ket\psi\in\mathcal H$, ensuring that the projection $\langle n|\mathcal F\psi\rangle$ is well defined for every $n$.

\item \textbf{Bochner integrability.} The vector-valued map $\gamma\mapsto \ket\gamma\,\langle\gamma|\mathcal F\psi\rangle$ is Bochner integrable with respect to $\mu$, which is equivalent to the condition
\begin{equation}
\int_\Omega d\mu(\gamma)\,\bigl|\langle\gamma|\mathcal F\psi\rangle\bigr| \;<\;\infty.
\label{BochnerCond}
\end{equation}

\item \textbf{Domain and Bochner integrability for compositions.} Let $\mathcal G$ be a closed, densely defined operator on $\mathcal H$, and let $\ket\phi\in\mathcal H$ be a vector for which the Bochner integral $\int_\Omega d\mu(\gamma)\,\ket\gamma\langle\gamma|\phi\rangle$ exists (i.e., $\ket\phi$ satisfies Assumption~4 with $\mathcal F\ket\psi$ replaced by $\ket\phi$). It is assumed that:
\begin{enumerate}
\item[(a)] $\ket\gamma\in\mathcal D(\mathcal G)$ for $\mu$-almost every $\gamma\in\Omega$;
\item[(b)] the vector-valued map $\gamma\mapsto \mathcal G\ket\gamma\,\langle\gamma|\phi\rangle$ is Bochner integrable with respect to $\mu$, i.e.,
\begin{equation}
\int_\Omega d\mu(\gamma)\,\bigl\|\mathcal G\ket\gamma\bigr\|\,\bigl|\langle\gamma|\phi\rangle\bigr| \;<\;\infty.
\label{BochnerCondG}
\end{equation}
\end{enumerate}
\end{enumerate}

\begin{lemma}[Exchange of integral and projection]
\label{lem:exchange}
Under Assumption~4, the vector-valued integral $\int_\Omega d\mu(\gamma)\,\ket\gamma\langle\gamma|\mathcal F\psi\rangle$ exists as a Bochner integral in $\mathcal H$, and for every basis vector $\ket n$,
\begin{equation}
\left\langle n\left|\int_{\Omega} d\mu(\gamma)\, \ket\gamma\langle\gamma|\mathcal F\psi\rangle\right.\right\rangle
=\int_{\Omega} d\mu(\gamma)\,\langle n|\gamma\rangle\langle\gamma|\mathcal F\psi\rangle.
\label{Exchange}
\end{equation}
\end{lemma}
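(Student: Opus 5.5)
The plan is to apply the standard Bochner criterion and then use that bounded linear functionals commute with Bochner integrals. Set $f(\gamma)=\ket\gamma\langle\gamma|\mathcal F\psi\rangle$.

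\textbf{Step 1: strong measurability.} The family $\{\ket\gamma\}$ is continuous in $\gamma$, so $\gamma\mapsto\ket\gamma$ is measurable. Hence $\gamma\mapsto\langle\gamma|\mathcal F\psi\rangle$ is a measurable scalar function. It follows that $f$ is weakly measurable: for every $\chi\in\mathcal H$, the function $\gamma\mapsto\langle\chi|\gamma\rangle\langle\gamma|\mathcal F\psi\rangle$ is measurable. Since $\mathcal H$ is separable (Assumption~1), Pettis' measurability theorem upgrades this to strong measurability of $f$.

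\textbf{Step 2: integrability of the norm.} By Assumption~2, $\|\ket\gamma\|=1$ for $\mu$-a.e.\ $\gamma$. Therefore
\[
\|f(\gamma)\|=\bigl|\langle\gamma|\mathcal F\psi\rangle\bigr| \quad \mu\text{-a.e.}
\]
Bochner's theorem says a strongly measurable $f$ is Bochner integrable exactly when $\int\|f\|\,d\mu<\infty$. This reduces integrability to condition~(\ref{BochnerCond}), which holds by Assumption~4. This also justifies the equivalence claimed in Assumption~4. So the vector $v=\int_\Omega d\mu(\gamma)\,f(\gamma)$ exists in $\mathcal H$.

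\textbf{Step 3: exchange.} The functional $\Lambda_n(x)=\langle n|x\rangle$ is bounded, with norm $1$. For a Bochner integral one has $\Lambda\bigl(\int f\,d\mu\bigr)=\int\Lambda(f)\,d\mu$ for every bounded linear $\Lambda$. The argument is the usual one:
\begin{enumerate}
\item[(i)] It holds for simple functions, by linearity of $\Lambda$.
\item[(ii)] Take simple $f_j\to f$ pointwise a.e.\ with $\int\|f_j-f\|\,d\mu\to0$; such a sequence exists by the definition of Bochner integrability.
\item[(iii)] Then $\int f_j\,d\mu\to v$ in norm, so $\Lambda_n$ of the left side converges to $\Lambda_n(v)$.
\item[(iv)] On the right, $\bigl|\int\Lambda_n(f_j)\,d\mu-\int\Lambda_n(f)\,d\mu\bigr|\le\int\|f_j-f\|\,d\mu\to0$. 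This also shows $\Lambda_n\circ f$ is scalar-integrable, dominated by $|\langle\gamma|\mathcal F\psi\rangle|$.
\end{enumerate}
Applying this with $\Lambda=\Lambda_n$ gives (\ref{Exchange}).

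The main obstacle is Step~1. The paper only says ``continuous family'' and does not specify a topology on $\Omega$ or the $\sigma$-algebra of $\mu$. I would make explicit that $\Omega$ carries its Borel $\sigma$-algebra, with $\mu$ defined on it, so that continuity implies weak measurability. With that in place, separability lets Pettis' theorem finish the step. Everything else is standard.
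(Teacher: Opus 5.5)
Your proposal is correct and takes the same route as the paper: since $\|\ket\gamma\|=1$, Assumption~4 becomes the Bochner norm-integrability criterion, and the bounded functional $\langle n|\cdot\rangle$ then commutes with the Bochner integral. Beyond the paper, you prove strong measurability (via Pettis' theorem and separability), which the paper leaves implicit, and you prove the functional-exchange property, which the paper simply cites from Diestel--Uhl.
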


\begin{proof}
Since $\|\ket\gamma\|=1$, the condition (\ref{BochnerCond}) is precisely the statement $\int_\Omega d\mu(\gamma)\,\|\ket\gamma\langle\gamma|\mathcal F\psi\rangle\|<\infty$, which constitutes the standard criterion for Bochner integrability of the map $\gamma\mapsto\ket\gamma\langle\gamma|\mathcal F\psi\rangle$ \cite{DiestelUhl1977}. A fundamental property of the Bochner integral is that any bounded linear functional $\varphi$ on $\mathcal H$ commutes with the integration: $\varphi\bigl(\int_\Omega f\,d\mu\bigr)=\int_\Omega \varphi(f)\,d\mu$. Setting $\varphi=\langle n|\cdot\rangle$, which is bounded since $\|\,\langle n|\cdot\rangle\,\|=1$, directly yields (\ref{Exchange}).
\end{proof}

Assumption~4 provides a concrete, checkable hypothesis: it requires that the continuous symbol $\gamma\mapsto\langle\gamma|\mathcal F\psi\rangle$ of the vector $\mathcal F\ket\psi$ be absolutely integrable on $(\Omega,\mu)$. In the physical applications developed in Sec.~\ref{sec:examples}, this reduces to verifying that a Gaussian or Gaussian-dominated function of $\gamma$ is integrable, which is immediate to establish.

\begin{lemma}[Exchange of a closed operator with the Bochner integral]
\label{lem:exchangeG}
Under Assumption~5, the vector $\ket\phi$ lies in the domain $\mathcal D(\mathcal G)$, the vector-valued integral $\int_\Omega d\mu(\gamma)\,\mathcal G\ket\gamma\,\langle\gamma|\phi\rangle$ exists as a Bochner integral in $\mathcal H$, and
\begin{equation}
\mathcal G\int_\Omega d\mu(\gamma)\,\ket\gamma\langle\gamma|\phi\rangle
= \int_\Omega d\mu(\gamma)\,\mathcal G\ket\gamma\,\langle\gamma|\phi\rangle.
\label{ExchangeG}
\end{equation}
\end{lemma}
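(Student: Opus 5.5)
The argument is the Hille theorem for closed operators and Bochner integrals, applied to the map $f(\gamma)=\ket\gamma\langle\gamma|\phi\rangle$. The plan has three steps.

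\textbf{Step 1: identify the integral with $\ket\phi$.} By hypothesis $f$ is Bochner integrable; call its integral $v\in\mathcal H$. For any $\chi\in\mathcal H$, the functional $\langle\chi|\cdot\rangle$ is bounded, so as in the proof of Lemma~\ref{lem:exchange} it commutes with the integral. Hence
\[
\langle\chi|v\rangle=\int_\Omega d\mu(\gamma)\,\langle\chi|\gamma\rangle\langle\gamma|\phi\rangle=\langle\chi|\phi\rangle,
\]
where the last equality is the weak resolution of the identity~(\ref{ROI}). Since this holds for every $\chi$, we get $v=\ket\phi$.

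\textbf{Step 2: Bochner integrability of $g=\mathcal G f$.} By Assumption~5(a), $f(\gamma)\in\mathcal D(\mathcal G)$ almost everywhere. Set $g(\gamma)=\mathcal G\ket\gamma\,\langle\gamma|\phi\rangle$. Assumption~5(b) gives $\int\|g\|\,d\mu<\infty$, provided $g$ is strongly measurable. To secure measurability, I would look at the graph map $\gamma\mapsto(f(\gamma),g(\gamma))$ in $\mathcal H\oplus\mathcal H$. This space is separable, so by Pettis' theorem weak measurability suffices.
\begin{itemize}
\item Weak measurability of $f$ is immediate.
\item For $g$, I would either take measurability of $\gamma\mapsto\mathcal G\ket\gamma$ as implicit in Assumption~5(b), since the norm there must be measurable for the integral to make sense, or argue as follows. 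For $\chi\in\mathcal D(\mathcal G^*)$ we have $\langle\chi|g(\gamma)\rangle=\langle\mathcal G^*\chi|f(\gamma)\rangle$, which is measurable. Since $\mathcal G$ is closed and densely defined, $\mathcal D(\mathcal G^*)$ is dense. Approximating a general $\chi$ by such vectors then gives weak measurability of $g$, because pointwise limits of measurable functions are measurable.
\end{itemize}
With measurability in hand, $(f,g)$ is Bochner integrable in $\mathcal H\oplus\mathcal H$.

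\textbf{Step 3: closedness (Hille's theorem).} The pair $(f,g)$ takes values almost everywhere in the graph $\Gamma(\mathcal G)$, which is a closed subspace of $\mathcal H\oplus\mathcal H$ because $\mathcal G$ is closed. The Bochner integral of a function taking values in a closed subspace lies in that subspace. To see this, approximate by simple functions and project them onto the subspace with the orthogonal projection $P_\Gamma$. Since $P_\Gamma$ is a contraction, the projected simple functions still converge in $L^1$ to $(f,g)$, and their integrals lie in $\Gamma(\mathcal G)$. Alternatively, pair the integral with every vector in $\Gamma(\mathcal G)^\perp$ and use Step 1's functional-commutation argument. Either way,
\[
\Bigl(\int f\,d\mu,\int g\,d\mu\Bigr)=\Bigl(\ket\phi,\int g\,d\mu\Bigr)\in\Gamma(\mathcal G).
\]
This says exactly that $\ket\phi\in\mathcal D(\mathcal G)$ and $\mathcal G\ket\phi=\int g\,d\mu$, which is~(\ref{ExchangeG}).

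The main obstacle is the measurability point in Step 2. Assumption~5 only states the norm condition, so I would either make strong measurability of $g$ explicit or derive it via the $\mathcal G^*$-duality argument above. Step 3 itself is the standard closed-operator version of Hille's theorem \cite{DiestelUhl1977}.
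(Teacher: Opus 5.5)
Your proposal is correct and takes the same route as the paper, whose proof simply invokes Hille's theorem for closed operators and Bochner integrals; your Step~3 is a self-contained proof of that theorem via the closed graph $\Gamma(\mathcal G)\subset\mathcal H\oplus\mathcal H$. Your Steps~1 and~2 usefully make explicit two points the paper leaves tacit. The first is the identification $\int_\Omega d\mu(\gamma)\,\ket\gamma\langle\gamma|\phi\rangle=\ket\phi$ via the weak resolution of the identity, which is needed to conclude $\ket\phi\in\mathcal D(\mathcal G)$ and not merely that the integral lies in the domain. The second is the strong measurability of $\gamma\mapsto\mathcal G\ket\gamma\,\langle\gamma|\phi\rangle$, which the norm condition in Assumption~5(b) does not by itself supply.
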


\begin{proof}
By hypothesis (b), condition (\ref{BochnerCondG}) is the Bochner-integrability criterion \cite{DiestelUhl1977} for the map $\gamma\mapsto\mathcal G\ket\gamma\,\langle\gamma|\phi\rangle$, ensuring the existence of the right-hand side of (\ref{ExchangeG}) as a well-defined vector in $\mathcal H$. To show that this vector equals $\mathcal G$ applied to the integral of the unacted states, the closed-graph criterion for Bochner integrals is invoked: if $\mathcal G$ is a closed operator, and both $\gamma\mapsto\ket\gamma\langle\gamma|\phi\rangle$ and $\gamma\mapsto\mathcal G\ket\gamma\langle\gamma|\phi\rangle$ are Bochner integrable with $\ket\gamma\in\mathcal D(\mathcal G)$ almost everywhere, then the integral of the first map lies in $\mathcal D(\mathcal G)$ and satisfies (\ref{ExchangeG}) \cite{DiestelUhl1977, Hille1957}. Assumptions~5(a) and 5(b) provide exactly these conditions, completing the proof.
\end{proof}

It is emphasized that the closedness of $\mathcal G$ alone does not justify commuting the operator past the integral sign; it is the deliberate conjunction of closedness with the domain condition (a) and the integrability condition (b) that mathematically licenses this step. Assumption~5 makes this fully explicit and checkable.

%%%%%%%%%%%%%%%%%%%%%%%%%%%%%%%%%%%%%%%%%%%%%%%%%%%%%%%%%%%%%%%%%%%%%%
\section{The Nested Integration Framework}
\label{sec:factory}

The core mathematical contribution of this work is a mechanism that systematically converts chains of operator products into exact, multi-fold integral identities. Rather than presenting this as a computational trick, it is formalized here as a rigorous operator-theoretic theorem. 

To ensure the reader immediately grasps the main result, the general \textbf{Nested Integral Generator Theorem} is presented first, followed by its single-insertion corollary and its physical connection to Feynman path integrals.

\subsection{The Main Result}

\begin{theorem}[Nested Integral Generator Theorem]
\label{thm:telescope}
Sequentially inserting resolutions of the identity between the factors of an operator product generates an exact $N$-fold integral representation of the corresponding compound matrix element. More precisely, let $\mathcal F_1, \dots, \mathcal F_N$ be a finite chain of closed, densely defined linear operators on a separable Hilbert space $\mathcal H$. Let $\ket\psi \in \mathcal D(\mathcal F_1)$, and define the intermediate state vectors sequentially by
\begin{equation}
\ket{\phi_0} := \ket\psi, \quad \text{and} \quad \ket{\phi_k} := \mathcal F_k \ket{\phi_{k-1}} \quad \text{for } k=1,\dots,N-1.
\end{equation}
Suppose that at each stage $k$:
\begin{itemize}
    \item[(i)] The vector \(\ket{\phi_k}\) satisfies the Bochner integrability condition (Assumption~4) with respect to a resolution of the identity \(\{\ket{\gamma_k}\}_{\gamma_k\in\Omega_k}\) satisfying (\ref{ROI});
    \item[(ii)] The operator-vector pair \((\mathcal F_{k+1}, \ket{\phi_k})\) satisfies the domain and exchange conditions (Assumption~5).
\end{itemize}
Then, the composite domain is preserved, i.e., $\ket\psi \in \mathcal D(\mathcal F_N \cdots \mathcal F_1)$, and for every basis index $n$, the following exact $N$-fold nested integral identity holds
\begin{equation}
\boxed{
\langle n|\mathcal F_N\cdots\mathcal F_1|\psi\rangle
=\int_{\Omega_{N-1}} \!\!\!\! d\mu(\gamma_{N-1})\cdots\int_{\Omega_1} \!\! d\mu(\gamma_1)\;
\langle n|\mathcal F_N|\gamma_{N-1}\rangle\langle\gamma_{N-1}|\mathcal F_{N-1}|\gamma_{N-2}\rangle\cdots\langle\gamma_1|\mathcal F_1|\psi\rangle
}
\label{TelescopeN}
\end{equation}
\end{theorem}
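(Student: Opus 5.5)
The plan is to argue by induction on the stage index, using at each stage the vector-valued identity rather than the scalar one, and projecting onto $\bra n$ only at the very end. The key intermediate claim is the following statement $(P_k)$, for $k=1,\dots,N-1$. We have $\ket{\phi_k}\in\mathcal D(\mathcal F_{k+1})$, and
\begin{equation*}
\mathcal F_{k+1}\ket{\phi_k}=\int_{\Omega_k} d\mu(\gamma_k)\,\mathcal F_{k+1}\ket{\gamma_k}\langle\gamma_k|\phi_k\rangle ,
\end{equation*}
where the integral is Bochner in $\mathcal H$.

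For fixed $k$, this follows by combining the resolution of the identity (\ref{ROI}) with Lemmas~\ref{lem:exchange} and~\ref{lem:exchangeG}. Hypothesis (i) guarantees that the Bochner integral $\ket{\Phi_k}:=\int_{\Omega_k} d\mu(\gamma_k)\ket{\gamma_k}\langle\gamma_k|\phi_k\rangle$ exists. For any $\chi\in\mathcal H$, the bounded functional $\langle\chi|\cdot\rangle$ commutes with this integral, by the argument of Lemma~\ref{lem:exchange} with $\bra n$ replaced by $\bra\chi$. The weak form of (\ref{ROI}) then gives $\langle\chi|\Phi_k\rangle=\langle\chi|\phi_k\rangle$, so $\ket{\Phi_k}=\ket{\phi_k}$. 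Hypothesis (ii) lets us apply Lemma~\ref{lem:exchangeG} with $\mathcal G=\mathcal F_{k+1}$ and $\ket\phi=\ket{\phi_k}$, which yields both the domain statement and the displayed formula.

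The domain claim comes directly from the recursion. We have $\ket{\phi_0}=\ket\psi\in\mathcal D(\mathcal F_1)$, so $\ket{\phi_1}$ is defined. Each $(P_k)$ gives $\ket{\phi_k}\in\mathcal D(\mathcal F_{k+1})$, so $\ket{\phi_{k+1}}$ is defined. Hence $\mathcal F_N\cdots\mathcal F_1\ket\psi=\mathcal F_N\ket{\phi_{N-1}}$ makes sense, which means $\ket\psi\in\mathcal D(\mathcal F_N\cdots\mathcal F_1)$.

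For the integral identity, I would unfold from the outside in. Applying $\bra n$ to $(P_{N-1})$ and using boundedness of $\langle n|\cdot\rangle$ gives
\begin{equation*}
\langle n|\mathcal F_N\cdots\mathcal F_1|\psi\rangle=\int d\mu(\gamma_{N-1})\,\langle n|\mathcal F_N|\gamma_{N-1}\rangle\langle\gamma_{N-1}|\phi_{N-1}\rangle .
\end{equation*}
Next, apply the functional $\bra{\gamma_{N-1}}$ to $(P_{N-2})$ for each fixed $\gamma_{N-1}$; again only boundedness is needed. This gives
\begin{equation*}
\langle\gamma_{N-1}|\phi_{N-1}\rangle=\int d\mu(\gamma_{N-2})\,\langle\gamma_{N-1}|\mathcal F_{N-1}|\gamma_{N-2}\rangle\langle\gamma_{N-2}|\phi_{N-2}\rangle .
\end{equation*}
Substituting this into the previous line and iterating down to $\langle\gamma_1|\phi_1\rangle=\langle\gamma_1|\mathcal F_1|\psi\rangle$ produces (\ref{TelescopeN}), read as iterated integrals in the stated order. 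Here $\langle\gamma_{k+1}|\mathcal F_{k+1}|\gamma_k\rangle$ means $\langle\gamma_{k+1}|\mathcal F_{k+1}\gamma_k\rangle$, which is well defined by Assumption~5(a).

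The main obstacle is making sense of the nesting. The formula is naturally an iterated integral, and the innermost functions must be measurable and integrable in the outer variables. I would settle for iterated integrals, which is what the boxed formula literally writes. Each inner integral equals $\langle\gamma_{k+1}|\phi_{k+1}\rangle$, a function whose absolute integrability is exactly hypothesis (i) at the next stage, so every outer integral is well defined. I would add a remark that joint absolute integrability on $\Omega_1\times\cdots\times\Omega_{N-1}$ is not asserted. If one wants a Fubini-type reordering, one would additionally assume
\begin{equation*}
\int\prod_k d\mu(\gamma_k)\,\bigl|\langle n|\mathcal F_N|\gamma_{N-1}\rangle\cdots\langle\gamma_1|\mathcal F_1|\psi\rangle\bigr|<\infty .
\end{equation*}
A second, minor point is the identification $\ket{\Phi_k}=\ket{\phi_k}$. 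It uses the weak form of (\ref{ROI}) together with the fact that bounded functionals commute with the Bochner integral, and should be stated explicitly, since Lemma~\ref{lem:exchangeG} is phrased in terms of $\ket{\Phi_k}$ rather than $\ket{\phi_k}$.
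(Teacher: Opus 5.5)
Your proof is correct and follows essentially the same route as the paper's induction. At each stage you use Lemma~\ref{lem:exchange} plus the weak resolution of the identity to write $\ket{\phi_k}$ as a Bochner integral, apply Lemma~\ref{lem:exchangeG} to pass $\mathcal F_{k+1}$ inside (which also gives the domain statement), and then project and substitute recursively from the outside in. It is also slightly more careful than the paper's version, since it makes explicit the identification $\ket{\Phi_k}=\ket{\phi_k}$ and the need to project onto non-basis bras $\bra{\gamma_{k+1}}$, both of which the paper's base case and inductive hypothesis use tacitly. One small correction: the absolute integrability of each outer integrand $\langle\gamma_{k+2}|\mathcal F_{k+2}|\gamma_{k+1}\rangle\langle\gamma_{k+1}|\phi_{k+1}\rangle$ comes from Assumption~5(b) at that stage, which you already invoked via $(P_{k+1})$, and not from hypothesis (i) alone, because the operator factor may be unbounded in $\gamma_{k+1}$.
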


\begin{proof}
The proof proceeds by induction on the chain length $N$.

\paragraph{Base case ($N=2$).} Let $\ket\phi := \mathcal F_1\ket\psi$. By hypothesis (i) and the direct application of Lemma~\ref{lem:exchange} to $\ket\phi$, this gives
\begin{equation}
\mathcal F_1\ket\psi = \ket\phi = \int_{\Omega_1} d\mu(\gamma_1)\,\ket{\gamma_1}\langle\gamma_1|\phi\rangle.
\label{Base1}
\end{equation}
Applying the closed operator $\mathcal F_2$ to both sides, hypothesis (ii) and Lemma~\ref{lem:exchangeG} justify exchanging the operator with the Bochner integral
\begin{equation}
\mathcal F_2\ket\phi = \int_{\Omega_1} d\mu(\gamma_1)\,\mathcal F_2\ket{\gamma_1}\,\langle\gamma_1|\phi\rangle.
\label{Base2}
\end{equation}
This confirms that $\ket\psi \in \mathcal D(\mathcal F_2\mathcal F_1)$. Finally, taking the inner product with the basis vector $\bra n$ and invoking Lemma~\ref{lem:exchange} once more yields the $N=2$ identity
\begin{equation}
\langle n|\mathcal F_2\mathcal F_1|\psi\rangle = \int_{\Omega_1} d\mu(\gamma_1)\,\langle n|\mathcal F_2|\gamma_1\rangle\,\langle\gamma_1|\mathcal F_1|\psi\rangle.
\end{equation}

\paragraph{Inductive step.} Assume the theorem holds for a chain of length $N-1$, meaning the state $\ket{\phi_{N-1}} := \mathcal F_{N-1}\cdots\mathcal F_1\ket\psi$ satisfies
\begin{equation}
\langle\gamma_{N-1}|\phi_{N-1}\rangle
= \int_{\Omega_{N-2}} \!\!\!\! d\mu(\gamma_{N-2})\cdots\int_{\Omega_1} \!\! d\mu(\gamma_1)\,
\langle\gamma_{N-1}|\mathcal F_{N-1}|\gamma_{N-2}\rangle\cdots\langle\gamma_1|\mathcal F_1|\psi\rangle.
\label{InductiveHyp}
\end{equation}
By hypothesis, the pair $(\mathcal F_N, \ket{\phi_{N-1}})$ satisfies Assumption~5. Applying the base-case result to the product $\mathcal F_N \ket{\phi_{N-1}}$ gives
\begin{equation}
\langle n|\mathcal F_N|\phi_{N-1}\rangle = \int_{\Omega_{N-1}} d\mu(\gamma_{N-1})\,\langle n|\mathcal F_N|\gamma_{N-1}\rangle\langle\gamma_{N-1}|\phi_{N-1}\rangle.
\end{equation}
Substituting (\ref{InductiveHyp}) into this expression completes the induction and proves (\ref{TelescopeN}) for all $N$.
\end{proof}

\subsection{Special Cases and Corollaries}

The simplest non-trivial realization of Theorem~\ref{thm:telescope} occurs for a single operator insertion ($N=1$), which is stated here for completeness.

\begin{corollary}[Single-Insertion Identity]
\label{cor:single}
Under Assumptions 1--4, the linear operator tautology $\mathcal F\ket{\psi} = \mathcal F\ket{\psi}$ generates the exact integral identity
\begin{equation}
\langle n|\mathcal F|\psi\rangle
=\int_{\Omega} d\mu(\gamma)\,\langle n|\gamma\rangle\langle\gamma|\mathcal F|\psi\rangle,
\label{SingleInsert}
\end{equation}
for every basis index $n$.
\end{corollary}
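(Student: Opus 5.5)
The plan is to derive Corollary~\ref{cor:single} directly from Lemma~\ref{lem:exchange}, without passing through the inductive machinery of Theorem~\ref{thm:telescope}. The $N=1$ case of (\ref{TelescopeN}) has no intermediate integration variable at all, so the identity is really a projection of the resolution of the identity itself.

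\textbf{Step 1.} Set $\ket\phi:=\mathcal F\ket\psi$. By Assumption~3 this is a well-defined vector of $\mathcal H$.

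\textbf{Step 2.} Invoke Lemma~\ref{lem:exchange}. Assumption~4 says that $\gamma\mapsto\ket\gamma\langle\gamma|\phi\rangle$ is Bochner integrable, so
\[
\ket{\Phi}:=\int_\Omega d\mu(\gamma)\,\ket\gamma\langle\gamma|\phi\rangle
\]
exists in $\mathcal H$. The same lemma gives, for every basis vector $\ket n$, the scalar identity $\langle n|\Phi\rangle=\int_\Omega d\mu(\gamma)\,\langle n|\gamma\rangle\langle\gamma|\phi\rangle$.

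\textbf{Step 3.} Identify $\ket\Phi$ with $\ket\phi$. This is the only point needing care, because the resolution of the identity (\ref{ROI}) is assumed only weakly. Take an arbitrary $\chi\in\mathcal H$. The functional $\langle\chi|\cdot\rangle$ is bounded, so it commutes with the Bochner integral, exactly as in the proof of Lemma~\ref{lem:exchange}. This gives
\[
\langle\chi|\Phi\rangle=\int_\Omega d\mu(\gamma)\,\langle\chi|\gamma\rangle\langle\gamma|\phi\rangle .
\]
By the weak form of (\ref{ROI}), the right-hand side equals $\langle\chi|I|\phi\rangle=\langle\chi|\phi\rangle$. Since $\chi$ is arbitrary, $\ket\Phi=\ket\phi=\mathcal F\ket\psi$.

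\textbf{Step 4.} Combine Steps~2 and~3 to get $\langle n|\mathcal F|\psi\rangle=\langle n|\Phi\rangle=\int_\Omega d\mu(\gamma)\,\langle n|\gamma\rangle\langle\gamma|\mathcal F|\psi\rangle$, which is (\ref{SingleInsert}). Here $\langle\gamma|\mathcal F|\psi\rangle$ is read as $\langle\gamma|\mathcal F\psi\rangle$, so no domain condition on $\ket\gamma$ is needed, and Assumption~5 is never used.

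The main obstacle is Step~3, the passage from the weak operator identity (\ref{ROI}) to a vector identity. It is also the step the paper's base case of Theorem~\ref{thm:telescope} uses tacitly in (\ref{Base1}). The argument works because testing against an arbitrary $\chi$, not just the basis vectors, is legitimate once Bochner integrability is known. In fact, since $\ket n$ is a special choice of $\chi$, the weak identity with $\chi=\ket n$ already yields (\ref{SingleInsert}) directly. The content of Assumption~4 is that both sides are then finite and absolutely convergent, and that $\ket\Phi$ is a genuine vector; this is what the nested version requires.
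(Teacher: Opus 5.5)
Your proof is correct, but it is organized differently from the paper's. The paper proves the corollary in one line, as the $N=1$ case of Theorem~\ref{thm:telescope} with the hypotheses collapsing to Lemma~\ref{lem:exchange}. As you point out, that specialization is literally vacuous: at $N=1$ the right-hand side of \eqref{TelescopeN} contains no integration. What is really meant is the first step \eqref{Base1} of the base case followed by projection onto $\bra n$. Equivalently, it is the case $N=2$ with $\mathcal F_2=I$, where Assumption~5 reduces to Assumption~4 because $\|I\ket\gamma\|=1$.

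You instead argue directly from Lemma~\ref{lem:exchange} and the weak form of \eqref{ROI}. You also make explicit the identification of the Bochner integral $\ket\Phi$ with $\mathcal F\ket\psi$, which the paper asserts without justification in \eqref{Base1}; Lemma~\ref{lem:exchange} only gives existence and the exchange with $\bra n$. The paper's route gives a uniform presentation in which the corollary is the bottom rung of the nested hierarchy. Yours is self-contained, avoids the degenerate specialization, and shows that \eqref{SingleInsert} is simply the weak resolution of the identity tested between $\bra n$ and $\mathcal F\ket\psi$.

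One refinement of your closing remark: absolute convergence of the scalar integral does not need Assumption~4 either. The weak identity on the diagonal gives $\int_\Omega d\mu(\gamma)\,|\langle\gamma|\chi\rangle|^2=\|\chi\|^2$ for every $\chi$. Cauchy--Schwarz then bounds $\int_\Omega d\mu(\gamma)\,|\langle n|\gamma\rangle\langle\gamma|\mathcal F\psi\rangle|$ by $\|\mathcal F\psi\|$. What Assumption~4 genuinely adds is that $\ket\Phi$ exists as a Bochner integral, which the nested theorem needs but this corollary does not.
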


\begin{proof}
This is the special case of Theorem~\ref{thm:telescope} with $N=1$, where the chain consists of a single closed operator $\mathcal F$, and the domain conditions collapse to the standard Bochner integrability of Lemma~\ref{lem:exchange}.
\end{proof}

\subsection{Physical Interpretation and Mathematical Context}

\begin{remark}[Feynman path-integral structure]
\label{rem:feynman}
Equation~(\ref{TelescopeN}) represents the discrete-time skeleton of a Feynman-type path integral. This connection provides a deep physical intuition for the Theorem. If each label set $\Omega_k$ is identified with a phase-space (or configuration-space) slice at an intermediate time-step, each operator transition kernel $\langle\gamma_{k}|\mathcal F_{k}|\gamma_{k-1}\rangle$ can be viewed as an exact short-time propagator or transfer-matrix element. Equation~(\ref{TelescopeN}) then takes the familiar path-integral form
\begin{equation}
\langle n|\mathcal F_N\cdots\mathcal F_1|\psi\rangle
=\int \mathcal D[\gamma]\;\prod_{k=1}^{N}\langle\gamma_k|\mathcal F_k|\gamma_{k-1}\rangle ,
\qquad \gamma_0:=\psi,\ \ \gamma_N:=n ,
\end{equation}
where $\int\mathcal D[\gamma]:=\int_{\Omega_{N-1}} d\mu(\gamma_{N-1})\cdots\int_{\Omega_1} d\mu(\gamma_1)$ acts as the exact, finite-dimensional path-integral measure over the intermediate states. 

Two key features distinguish this rigorous formulation from the heuristic derivations common in physics:
\begin{enumerate}
\item[(1)] \textbf{No Trotter-limit approximation:} Unlike standard path integrals, no continuum time-slicing limit ($N\to\infty$) or stationary-phase approximation is needed here. Equation~(\ref{TelescopeN}) is an \emph{exact} finite-dimensional identity for any finite $N$, governed by clear and checkable conditions at every single step.
\item[(2)] \textbf{Resolution independence:} The scaffolding is completely agnostic to the specific choice of the overcomplete family. One can use standard coherent states, regularized position/momentum eigenstates, or even switch to entirely different resolutions of the identity at different stages of the chain, provided they satisfy the core assumptions.
\end{enumerate}
\end{remark}

This algebraic scaffold mathematically licenses the formal path-integral manipulations ubiquitous in physics, recovering the traditional continuum path-integral limit only under the stronger analytical limits investigated in Sec.~\ref{sec:discussion}. 

Theorem~\ref{thm:telescope} is what endows the present framework with its generating character: instead of performing a single, isolated insertion, one can insert resolutions of the identity sequentially between any number of operators in a product. This converts a chain of individually calculated symbols into exact, multi-fold integral identities for their compositions under fully transparent domain and integrability bounds.

%%%%%%%%%%%%%%%%%%%%%%%%%%%%%%%%%%%%%%%%%%%%%%%%%%%%%%%%%%%%%%%%%%%%%%
%%%%%%%%%%%%%%%%%%%%%%%%%%%%%%%%%%%%%%%%%%%%%%%%%%%%%%%%%%%%%%%%%%%%%%
\section{Examples}
\label{sec:examples}

To demonstrate the versatility of the framework, we present several physical examples in increasing order of structural complexity. For the sake of clarity and to establish a step-by-step constructive approach, we begin with single-operator scenarios ($N=1$ in Theorem~\ref{thm:telescope}) representing elementary physical operations, where the general framework naturally collapses to the single-insertion identity (Corollary~\ref{cor:single}) under a single operator $\mathcal{F}$. We then proceed to compound multi-operator networks ($N \ge 2$) that exploit the full telescoping capability of the Nested Integral Generator Theorem.

Throughout, the canonical Glauber coherent states are employed
\[
\ket{\alpha}
=
e^{-|\alpha|^2/2}
\sum_{n=0}^{\infty}
\frac{\alpha^n}{\sqrt{n!}}\ket{n},
\]
which satisfy the resolution of the identity
\[
I
=
\int \frac{d^2\beta}{\pi}\,\ketbra{\beta}{\beta},
\qquad
\braket{\beta}{\alpha}
=
\exp\!\left[-\tfrac12|\alpha|^2-\tfrac12|\beta|^2+\beta^*\alpha\right].
\]
For convenience, the target state is chosen to be a coherent state of the corresponding Hilbert space throughout the examples. Accordingly, single-mode operators act on the coherent state $\ket{\alpha}$ (i.e., $\ket{\psi}=\ket{\alpha}$), whereas two-mode operators act on the product coherent state $\ket{\alpha,\lambda}=\ket{\alpha}\otimes\ket{\lambda}$.

For the single-mode examples, Assumption~4 reduces to the integrability of a Gaussian (or Gaussian-dominated) function of $\beta$ with respect to $d^2\beta$, which is automatically satisfied. The corresponding condition is likewise fulfilled in the two-mode Gaussian examples considered below. Therefore, Assumption~4 will not be verified repeatedly in the following examples.

\subsection{Identity Operator}

For $\mathcal F=I$, the single-insertion identity (\ref{SingleInsert}) gives $\braket{n}{\alpha}=\int\frac{d^2\beta}{\pi}\braket{n}{\beta}\braket{\beta}{\alpha}$. Expanding both coherent states in the Fock basis and canceling the common factor $e^{-|\alpha|^2/2}/\sqrt{n!}$ yields the familiar Gaussian moment identity
\begin{equation}
\int d^2\beta\, e^{-|\beta|^2+\beta^*\alpha}\,\beta^n = \pi\,\alpha^n.
\tag{MI-1}
\label{eq:MI1}
\end{equation}

\subsection{Elementary, Polynomial, and Analytic Operator Functions (Consolidated)}
\label{sec:consolidated}

Rather than re-deriving essentially the same Gaussian integral for each operator separately, note that the coherent-state bra and ket transform simply under the ladder operators
\begin{equation}
a\ket\alpha=\alpha\ket\alpha
\qquad\Longrightarrow\qquad
\bra\beta\,a^k\,\ket\alpha=\alpha^k\braket\beta\alpha,
\qquad
\bra\beta\,(a^\dagger)^k = (\beta^*)^k\bra\beta,
\label{eq:symbol-rule}
\end{equation}
where the second relation is the adjoint of $a^k\ket\beta=\beta^k\ket\beta$. Consequently, for any normal-ordered polynomial $\mathcal F=(a^\dagger)^k a^l$
\begin{equation}
\bra\beta\,\mathcal F\,\ket\alpha = (\beta^*)^k\alpha^l\,\braket\beta\alpha ,
\end{equation}
and the single-insertion identity (\ref{SingleInsert}) becomes, after using $\braket n\beta=e^{-|\beta|^2/2}\beta^n/\sqrt{n!}$ and canceling the common prefactor $e^{-|\alpha|^2/2}/\sqrt{n!}$
\begin{equation}
\boxed{\;
\int d^2\beta\, e^{-|\beta|^2+\beta^*\alpha}\,\beta^n(\beta^*)^k
=
\pi\,\frac{n!}{(n-k)!}\,\alpha^{n-k},
\qquad n\ge k,
\;}
\tag{MI-2}
\label{eq:MI2}
\end{equation}
(and zero for $n<k$), so that $\langle n|(a^\dagger)^k a^l|\alpha\rangle = \alpha^l\,\dfrac{n!}{(n-k)!\sqrt{n!}}\,\alpha^{n-k}\,e^{-|\alpha|^2/2}$ follows directly from \eqref{eq:MI2}. This single computation subsumes, as the special cases $(k,l)=(0,1)$, $(1,0)$, and $(1,1)$, the annihilation operator $a$, the creation operator $a^\dagger$, and the number operator $N=a^\dagger a$, respectively; the general polynomial $\mathcal F=\sum_{k,l}c_{kl}(a^\dagger)^ka^l$ follows by linearity.

\paragraph{Two distinct analytic generalizations.}
Equation \eqref{eq:MI2} admits two genuinely different analytic completions, according to whether the auxiliary function is holomorphic in $\beta$ or in $\beta^*$. These are \emph{not} the same identity in disguise: they have different right-hand sides, serve different purposes, and must not be conflated.

\emph{(i) Functions of $\beta$ (the Bargmann reproducing-kernel identity).} For $f$ entire and of suitable growth, term-by-term integration using $\int d^2\beta\,\beta^{m} e^{-|\beta|^2+\beta^*\alpha}=\pi\alpha^m$ on each Taylor coefficient of $f$ gives the standard Segal--Bargmann reproducing property
\begin{equation}
\boxed{\;
\int d^2\beta\, e^{-|\beta|^2+\beta^*\alpha}\,\beta^n f(\beta)
=
\pi\,\alpha^n f(\alpha) .
\;}
\tag{MI-3a}
\label{eq:MI3a}
\end{equation}
This is a clean, self-contained identity, but it does \emph{not} reduce to \eqref{eq:MI2}: setting $f(\beta)=\beta^k$ gives $\pi\alpha^{n+k}$, a different function of $\alpha$ than the right-hand side of \eqref{eq:MI2}. The reason is structural, not a matter of convention: $(\beta^*)^k$ is anti-holomorphic in $\beta$ and simply cannot be written as $f(\beta)$ for any function $f$.

\emph{(ii) Functions of $\beta^*$ (the operator-symbol identity).} This is the case actually required to unify $a$, $a^\dagger$, $N$, and general analytic functions of $a^\dagger$ into a single formula, since it is precisely the bra-side relation \eqref{eq:symbol-rule} --- $\bra\beta\,f(a^\dagger)\,\ket\alpha=f(\beta^*)\braket\beta\alpha$, valid for $f$ built from normal-ordered powers through its convergent Taylor expansion --- that produces a $\beta^*$-dependent (not $\beta$-dependent) integrand. Expanding $f(\beta^*)=\sum_k c_k(\beta^*)^k$ and applying \eqref{eq:MI2} term by term, the sum $\sum_k c_k\,n!/(n-k)!\,\alpha^{n-k}$ is exactly what one obtains by letting the differential operator $f(\partial_\alpha)$ act on $\alpha^n$, since $\partial_\alpha^k\alpha^n=n!/(n-k)!\,\alpha^{n-k}$. This gives
\begin{equation}
\boxed{\;
\int d^2\beta\, e^{-|\beta|^2+\beta^*\alpha}\,\beta^n f(\beta^*)
=
\pi\, f(\partial_\alpha)\,\alpha^n ,
\;}
\tag{MI-3b}
\label{eq:MI3b}
\end{equation}
which now correctly contains \eqref{eq:MI1} as the case $f\equiv1$ and \eqref{eq:MI2} as the case $f(\beta^*)=(\beta^*)^k$, since $\partial_\alpha^k\alpha^n=n!/(n-k)!\,\alpha^{n-k}$ reproduces \eqref{eq:MI2} exactly. Unlike \eqref{eq:MI3a}, the right-hand side of \eqref{eq:MI3b} is \emph{not} obtained by simply substituting $\beta\to\alpha$ in $f$; it is the action of the operator $f(\partial_\alpha)$ on the monomial $\alpha^n$, a distinction that matters as soon as $f$ has any non-constant term. The same substitution extends to nested functions $g(f(A))$ (with $A=a^\dagger$) via
\begin{equation}
\int d^2\beta\,e^{-|\beta|^2+\beta^*\alpha}\beta^n\,g(f(\beta^*))
=
\pi\,(g\circ f)(\partial_\alpha)\,\alpha^n .
\end{equation}

Both \eqref{eq:MI3a} and \eqref{eq:MI3b} are used elsewhere in this paper: \eqref{eq:MI3a} is a standalone fact about entire functions in the Bargmann representation, while \eqref{eq:MI3b} is the identity underlying the treatment of general normal-ordered Hamiltonians in Sec.~\ref{sec:hamiltonians}.

\paragraph{The displacement operator.}
Finally, for the displacement operator $D(\gamma)=\exp(\gamma a^\dagger-\gamma^*a)$, using
\[
D(\gamma)\ket\alpha
=
e^{(\gamma\alpha^*-\gamma^*\alpha)/2}\ket{\alpha+\gamma},
\qquad
\bra\beta D(\gamma)
=
e^{(\gamma\beta^*-\gamma^*\beta)/2}\bra{\beta-\gamma},
\]
(the second relation obtained by taking the adjoint of $D(\gamma)^\dagger\ket\beta=D(-\gamma)\ket\beta=e^{(-\gamma\beta^*+\gamma^*\beta)/2}\ket{\beta-\gamma}$), the master identity reduces, after cancellation of the common phase and Gaussian prefactors, to \eqref{eq:MI1} with $\alpha\to\alpha+\gamma$
\begin{equation}
\int d^2\beta\, e^{-|\beta|^2+\beta^*(\alpha+\gamma)}\,\beta^n
=
\pi(\alpha+\gamma)^n .
\end{equation}
This consolidates the six elementary single-mode examples into the two master formulas \eqref{eq:MI3a} and \eqref{eq:MI3b}, of which they are all special cases.

\subsection{Single-Mode Squeezing Operator}
\label{sec:singlemode-squeeze}

For the single-mode squeezing operator
\[
S(\xi)=\exp\!\left[\frac12\left(\xi^*a^2-\xi a^{\dagger2}\right)\right],
\qquad
\xi=re^{i\theta},
\]
acting directly on $\ket\alpha$, the Bogoliubov transformation is used
\[
S^\dagger(\xi)\,a\,S(\xi)=\mu a-\nu a^\dagger,
\qquad
\mu=\cosh r,\quad \nu=e^{i\theta}\sinh r,
\qquad
|\mu|^2-|\nu|^2=1,
\]
which follows directly from the Baker--Campbell--Hausdorff expansion of $e^{-G}ae^{G}$ with $G=\tfrac12(\xi^*a^2-\xi a^{\dagger2})$\cite{Yuen1976,CavesSchumaker1985I}.

Rather than attempting a bra-side identity $\bra\beta S(\xi)$ directly (which is not well defined on its own, since the disentangled form of $S(\xi)$ contains a bare $a^2$ factor that only reduces to an eigenvalue when acting on a \emph{ket}), the normally-ordered disentangling of the squeeze operator is used\cite{CavesSchumaker1985I,BarnettRadmore}
\begin{equation}
S(\xi)=\exp\!\left[-\frac{\sigma}{2}a^{\dagger2}\right]
\exp\!\left[-\ln\mu\left(N+\tfrac12\right)\right]
\exp\!\left[\frac{\tau}{2}a^{2}\right],
\qquad
\sigma\equiv\frac{\nu}{\mu},\quad
\tau\equiv\frac{\nu^*}{\mu},
\label{eq:disentangle}
\end{equation}
and evaluate the full matrix element $\bra\beta S(\xi)\ket\alpha$ in one pass, keeping the coherent ket present throughout so that $e^{i\theta' a^2}\ket\alpha = e^{i\theta'\alpha^2}\ket\alpha$ can be used immediately. Using $\bra\beta a^\dagger=\beta^*\bra\beta$ for the first factor, and the Fock-basis resummation
\begin{equation}
\bra\beta\, e^{-\ln\mu(N+1/2)}=\mu^{-1/2}\exp\!\left[-\frac12\left(1-\frac1{\mu^2}\right)|\beta|^2\right]\left\langle\frac\beta\mu\right|
\label{eq:number-op-action}
\end{equation}
for the middle factor, one finds, after the two $|\beta|^2$ pieces combine into the standard coherent-state normalization $-|\beta|^2/2$
\begin{equation}
\bra\beta S(\xi)\ket\alpha=\frac{1}{\sqrt\mu}\,
\exp\!\left[-\frac{|\beta|^2}2-\frac{|\alpha|^2}2+\frac{\beta^*\alpha}\mu-\frac{\nu}{2\mu}\beta^{*2}+\frac{\nu^*}{2\mu}\alpha^2\right].
\label{eq:beta-S-alpha}
\end{equation}
Projecting onto the number state via $\bra n\ket\beta=e^{-|\beta|^2/2}\beta^n/\sqrt{n!}$ and the resolution of identity $\int d^2\beta\,\ket\beta\bra\beta/\pi=\mathbb 1$ gives
\begin{equation}
\bra n S(\xi)\ket\alpha=\frac{1}{\pi\sqrt{n!\,\mu}}\,
\exp\!\left[-\frac{|\alpha|^2}2+\frac{\nu^*}{2\mu}\alpha^2\right]
\int d^2\beta\,\beta^n\,
\exp\!\left[-|\beta|^2-\frac{\nu}{2\mu}\beta^{*2}+\frac{\alpha}\mu\beta^*\right],
\label{MasterIdentitySqueeze}
\end{equation}
a Gaussian integral with a quadratic phase term. Its exact evaluation reproduces the known squeezed coherent-state matrix element in closed form
\begin{equation}
\bra n S(\xi)\ket\alpha=\frac{1}{\sqrt{n!\,\mu}}\left(\frac{\nu}{2\mu}\right)^{n/2}
\exp\!\left[-\frac{|\alpha|^2}2+\frac{\nu^*}{2\mu}\alpha^2\right]
H_n\!\left(\frac{\alpha}{\sqrt{2\mu\nu}}\right),
\label{eq:squeeze-coherent-ME}
\end{equation}
where $H_n$ is the standard Hermite polynomial. We denote this squeezed transition kernel $K_n(\xi;\alpha) \equiv \bra n S(\xi)\ket\alpha$ as formulated in Eq.~\eqref{eq:squeeze-coherent-ME}. This expression reduces correctly to $\bra n\ket\alpha=e^{-|\alpha|^2/2}\alpha^n/\sqrt{n!}$ as $r\to0$, and to the well-known even-photon squeezed-vacuum distribution $\bra{2m}S(\xi)\ket0=\sqrt{(2m)!}\,(-e^{i\theta}\tanh r/2)^m/(m!\sqrt{\cosh r})$ at $\alpha=0$\cite{Yuen1976,CavesSchumaker1985I,BarnettRadmore}. The analogous, more involved two-mode version is carried out in full in Sec.~\ref{sec:2mode-squeeze}.

\subsection{Two-Mode Squeezing Operator}
\label{sec:2mode-squeeze}

The two-mode squeezing calculation that the original derivation left at the level of ``the integral reduces to a Gaussian form... and can be evaluated exactly'' is now completed in full. Consider
\begin{equation}
S_2(\zeta) = \exp\!\left(\zeta a^\dagger b^\dagger-\zeta^* ab\right), \qquad \zeta=re^{i\theta},
\label{S2_def}
\end{equation}
acting on the two-mode coherent state $\ket{\alpha,\lambda}=\ket\alpha\otimes\ket\lambda$, with the two-mode resolution of the identity $I=\int\frac{d^2\beta\,d^2\gamma}{\pi^2}\,\ketbra{\beta,\gamma}{\beta,\gamma}$. Write $\mu=\cosh r$, $\nu=e^{i\theta}\sinh r$, $|\mu|^2-|\nu|^2=1$. As in the single-mode case, the Bogoliubov transformation follows from the Baker--Campbell--Hausdorff expansion of $S_2^\dagger(\zeta)\,a\,S_2(\zeta)$ and $S_2^\dagger(\zeta)\,b\,S_2(\zeta)$
\begin{equation}
S_2^\dagger(\zeta)\,a\,S_2(\zeta)= \mu a+\nu b^\dagger,
\qquad
S_2^\dagger(\zeta)\,b\,S_2(\zeta)= \mu b+\nu a^\dagger,
\label{eq:bogoliubov-2mode}
\end{equation}
following the same convention as Refs.~\cite{Yuen1976,SchumakerCaves1985II}.

\paragraph{Step 1: normal-ordered disentangling.} The standard Baker--Campbell--Hausdorff disentangling of the two-mode squeezing operator (see e.g.\ Ref.~\cite{BarnettRadmore}) reads
\begin{equation}
S_2(\zeta) = \frac{1}{\mu}\,\exp\!\left(\frac{\nu}{\mu}a^\dagger b^\dagger\right)\exp\!\left[-\ln\mu\,(a^\dagger a+b^\dagger b)\right]\exp\!\left(-\frac{\nu^*}{\mu}ab\right).
\label{eq:BCH2mode}
\end{equation}

\paragraph{Step 2: action on the coherent state.} Since $ab\ket{\alpha,\lambda}=\alpha\lambda\ket{\alpha,\lambda}$
\begin{equation}
\exp\!\left(-\frac{\nu^*}{\mu}ab\right)\ket{\alpha,\lambda} = e^{-\frac{\nu^*}{\mu}\alpha\lambda}\,\ket{\alpha,\lambda} .
\end{equation}
Using the elementary identity $e^{tN}\ket\alpha=\exp\!\left[\tfrac12|\alpha|^2(e^{2t}-1)\right]\ket{\alpha e^t}$ (immediate from comparing Fock-basis coefficients) with $N=a^\dagger a+b^\dagger b$ and $t=-\ln\mu$, so that $e^t=1/\mu$
\begin{equation}
\exp\!\left[-\ln\mu\,(a^\dagger a+b^\dagger b)\right]\ket{\alpha,\lambda}
= \exp\!\left[\tfrac12(|\alpha|^2+|\lambda|^2)\Bigl(\tfrac1{\mu^2}-1\Bigr)\right]\left|\frac{\alpha}{\mu},\frac{\lambda}{\mu}\right\rangle .
\end{equation}

\paragraph{Step 3: bra-side action of the creation exponential.} Since $ab\ket{\beta,\gamma}=\beta\gamma\ket{\beta,\gamma}$, the adjoint relation gives $\bra{\beta,\gamma}\exp\!\left(\frac{\nu}{\mu}a^\dagger b^\dagger\right)=e^{\frac{\nu}{\mu}\beta^*\gamma^*}\bra{\beta,\gamma}$.

\paragraph{Step 4: assembling the overlap kernel.} Combining Steps 1--3 and the two-mode coherent-state overlap $\braket{\beta,\gamma}{p,q}=\exp\!\left[-\tfrac12|\beta|^2-\tfrac12|p|^2+\beta^*p-\tfrac12|\gamma|^2-\tfrac12|q|^2+\gamma^*q\right]$ with $(p,q)=(\alpha/\mu,\lambda/\mu)$, and collecting the $|\alpha|^2,|\lambda|^2$ exponents [which combine as $\tfrac12(1/\mu^2-1)-\tfrac1{2\mu^2}=-\tfrac12$, restoring the expected normalization], one obtains the closed-form Gaussian kernel
\begin{equation}
\bra{\beta,\gamma}S_2(\zeta)\ket{\alpha,\lambda}
= \frac1\mu\,\exp\!\left[-\tfrac12\bigl(|\alpha|^2+|\lambda|^2+|\beta|^2+|\gamma|^2\bigr)
+\frac{\beta^*\alpha+\gamma^*\lambda}{\mu}-\frac{\nu^*}{\mu}\alpha\lambda+\frac{\nu}{\mu}\beta^*\gamma^*\right].
\label{eq:S2kernel}
\end{equation}

\paragraph{Step 5: the master identity and its exact evaluation.} By Corollary~\ref{cor:single}
\begin{equation}
\langle n,m|S_2(\zeta)|\alpha,\lambda\rangle = \int\frac{d^2\beta\,d^2\gamma}{\pi^2}\,\langle n,m|\beta,\gamma\rangle\,\bra{\beta,\gamma}S_2(\zeta)\ket{\alpha,\lambda} ,
\end{equation}
and using $\langle n,m|\beta,\gamma\rangle=e^{-\frac12|\beta|^2-\frac12|\gamma|^2}\beta^n\gamma^m/\sqrt{n!m!}$ together with \eqref{eq:S2kernel}, the $|\beta|^2,|\gamma|^2$ terms combine to $-|\beta|^2-|\gamma|^2$, leaving the two-dimensional integral
\begin{equation}
\langle n,m|S_2(\zeta)|\alpha,\lambda\rangle
= \frac{e^{-\frac12|\alpha|^2-\frac12|\lambda|^2-\frac{\nu^*}{\mu}\alpha\lambda}}{\mu\sqrt{n!m!}}
\int\frac{d^2\beta\,d^2\gamma}{\pi^2}\,\beta^n\gamma^m\,
\exp\!\left[-|\beta|^2-|\gamma|^2+\frac{\alpha}{\mu}\beta^*+\frac{\lambda}{\mu}\gamma^*+\frac{\nu}{\mu}\beta^*\gamma^*\right].
\end{equation}
Expanding $e^{(\nu/\mu)\beta^*\gamma^*}=\sum_k \left(\nu/\mu\right)^k(\beta^*\gamma^*)^k/k!$ and applying \eqref{eq:MI2} to each of the resulting one-dimensional integrals over $\beta$ and over $\gamma$ separately gives
\begin{align}
\langle n,m|S_2(\zeta)|\alpha,\lambda\rangle
&= \frac{e^{-\frac12|\alpha|^2-\frac12|\lambda|^2-\frac{\nu^*}{\mu}\alpha\lambda}}{\mu\sqrt{n!m!}} \sum_{k=0}^{\min(n,m)} \frac{1}{k!} \left( \frac{\nu}{\mu} \right)^k \left[ \int \frac{d^2\beta}{\pi} e^{-|\beta|^2+\beta^*(\alpha/\mu)} \beta^n (\beta^*)^k \right] \left[ \int \frac{d^2\gamma}{\pi} e^{-|\gamma|^2+\gamma^*(\lambda/\mu)} \gamma^m (\gamma^*)^k \right] \nonumber \\
&= \frac{e^{-\frac12|\alpha|^2-\frac12|\lambda|^2-\frac{\nu^*}{\mu}\alpha\lambda}}{\mu\sqrt{n!m!}} \sum_{k=0}^{\min(n,m)} \frac{1}{k!} \left( \frac{\nu}{\mu} \right)^k \left[ \frac{n!}{(n-k)!} \left( \frac{\alpha}{\mu} \right)^{n-k} \right] \left[ \frac{m!}{(m-k)!} \left( \frac{\lambda}{\mu} \right)^{m-k} \right].
\end{align}
Simplifying the powers of $\mu$ in the denominator ($\mu \cdot \mu^{n-k} \cdot \mu^{m-k} = \mu^{n+m-2k+1}$) leads to
\begin{equation}
\boxed{\;
\langle n,m|S_2(\zeta)|\alpha,\lambda\rangle
= \frac{e^{-\frac12|\alpha|^2-\frac12|\lambda|^2-\frac{\nu^*}{\mu}\alpha\lambda}}{\sqrt{n!m!}} \mu^{-(n+m+1)}
\sum_{k=0}^{\min(n,m)}\frac{n!\,m!}{k!(n-k)!(m-k)!}\,\alpha^{n-k}\lambda^{m-k} \left(\mu\nu\right)^k .
\;}
\label{eq:S2-final}
\end{equation}
\emph{Consistency check.} At $\alpha=\lambda=0$ (vacuum input), only the $k=n=m$ term survives (since $0^0=1$). In this limit, the sum collapses to a single term with $k=n$, yielding
\begin{equation}
\langle n,n|S_2(\zeta)|0,0\rangle = \frac{1}{n!} \mu^{-(2n+1)} \frac{(n!)^2}{n!} (\mu\nu)^n = \frac{1}{\mu} \left( \frac{\nu}{\mu} \right)^n,
\end{equation}
which is exactly the standard two-mode squeezed-vacuum photon-number amplitude\cite{Yuen1976,SchumakerCaves1985II}. This confirms the correctness of the normalization carried through Steps 1--5.

\subsection{Beam Splitter Operator}
\label{sec:beamsplitter}

As a second two-mode example, consider the passive beam-splitter operator $U_{BS}(\varphi)$. The underlying operator identity is simply
\[
U_{BS}(\varphi)\ket{\alpha,\lambda}
=
U_{BS}(\varphi)\,\mathbb{I}\ket{\alpha,\lambda},
\]
where the identity operator is resolved in the two-mode coherent-state basis. Applying the Nested Integral Generator Theorem then gives
\begin{equation}
\langle n,m|U_{BS}(\varphi)|\alpha,\lambda\rangle
=
\int
\frac{d^2\beta\,d^2\gamma}{\pi^2}\,
\langle n,m|\beta,\gamma\rangle
\langle\beta,\gamma|U_{BS}(\varphi)|\alpha,\lambda\rangle .
\end{equation}
To evaluate the kernel $\bra{\beta,\gamma}U_{BS}(\varphi)\ket{\alpha,\lambda}$, the explicit form is used
\begin{equation}
U_{BS}(\varphi)
=
\exp\!\left[\varphi\left(a^\dagger b-a b^\dagger\right)\right],
\qquad
\varphi\in\mathbb R ,
\label{BS_def}
\end{equation}
together with the Bogoliubov transformation
\[
U_{BS}^\dagger a\,U_{BS}=ca+sb,\qquad
U_{BS}^\dagger b\,U_{BS}=cb-sa,
\]
where
\[
c=\cos\varphi,\qquad
s=\sin\varphi,\qquad
c^2+s^2=1,
\]
which give the coherent-state map\cite{BarnettRadmore}
\[
U_{BS}(\varphi)\ket{\alpha,\lambda}
=
\ket{c\alpha+s\lambda,\,
c\lambda-s\alpha},
\]
with no additional phase, since $U_{BS}$ conserves the total photon number and leaves the vacuum invariant. Hence
\[
\langle\beta,\gamma|U_{BS}(\varphi)|\alpha,\lambda\rangle
=
\langle\beta,\gamma|
c\alpha+s\lambda,\,
c\lambda-s\alpha
\rangle,
\]
and substituting into the projection integral gives
\begin{equation}
\langle n,m|U_{BS}(\varphi)|\alpha,\lambda\rangle
=
\frac{e^{-\frac12(|\alpha|^2+|\lambda|^2)}}{\pi^2\sqrt{n!m!}}
\int
d^2\beta\,d^2\gamma\,
\beta^n\gamma^m
\exp\!\left[
-|\beta|^2-|\gamma|^2
+\beta^*(c\alpha+s\lambda)
+\gamma^*(c\lambda-s\alpha)
\right].
\label{eq:BS-master}
\end{equation}
Since the kernel factorizes into two independent Gaussian integrals, the Integral generating identity itself factorizes into two elementary Gaussian identities, each evaluated by Eq.~\eqref{eq:MI2}
\begin{equation}
\boxed{
\;
\langle n,m|U_{BS}(\varphi)|\alpha,\lambda\rangle
=
e^{-\frac12(|\alpha|^2+|\lambda|^2)}
\frac{(c\alpha+s\lambda)^n}{\sqrt{n!}}
\frac{(c\lambda-s\alpha)^m}{\sqrt{m!}} .
\;
}
\label{BS_explicit}
\end{equation}

Unlike the two-mode squeezing example, where the integral kernel remains coupled and produces a nontrivial finite sum, the passive beam splitter maps coherent states into coherent states, causing the Integral generating identity to factorize into two elementary Gaussian integrals.

\subsection{A Compound Two-Mode Network: Beam Splitter after Two-Mode Squeezing}
\label{sec:composite}

Theorem~\ref{thm:telescope} is now used to obtain an exact Fock-basis identity for a compound Gaussian network built from the two operators of the previous two subsections
\begin{equation}
W = U_{BS}(\varphi)\,S_2(\zeta) ,
\end{equation}
i.e.\ a two-mode squeezer followed by a beam splitter---exactly the type of building block used in continuous-variable photonic quantum circuits and in Gaussian boson sampling architectures such as Xanadu's Borealis processor \cite{Madsen2022}.

Rather than performing a fresh four-dimensional Gaussian integral from scratch, this derivation exploits the fact that $U_{BS}(\varphi)$ acts on coherent-state \emph{bras} exactly as it acts on kets, with $\varphi\to-\varphi$: since $U_{BS}(\varphi)^\dagger=U_{BS}(-\varphi)$ and $U_{BS}(-\varphi)\ket{\beta,\gamma}=\ket{c\beta-s\gamma,\,c\gamma+s\beta}$ (the coherent-state map of Sec.~\ref{sec:beamsplitter} evaluated at $\varphi\to-\varphi$), it follows that $\bra{\beta,\gamma}U_{BS}(\varphi) = \bra{c\beta-s\gamma,\,c\gamma+s\beta}$. Inserting the two-mode resolution of the identity via the master identity (equivalently, a single application of Corollary~\ref{cor:single} to $\mathcal F=W$) gives
\begin{equation}
\langle n,m|W|\alpha,\lambda\rangle
= \int\frac{d^2\beta\,d^2\gamma}{\pi^2}\,\langle n,m|\beta,\gamma\rangle\;
\bra{c\beta-s\gamma,\;c\gamma+s\beta}\,S_2(\zeta)\,\ket{\alpha,\lambda} ,
\label{eq:composite-master}
\end{equation}
where the second factor is precisely the kernel \eqref{eq:S2kernel} evaluated at the rotated arguments $\beta'=c\beta-s\gamma$, $\gamma'=c\gamma+s\beta$ (a measure-preserving rotation, so $d^2\beta\,d^2\gamma=d^2\beta'\,d^2\gamma'$ and $|\beta'|^2+|\gamma'|^2=|\beta|^2+|\gamma|^2$). Equation \eqref{eq:composite-master} is itself an exact, closed integral identity for the compound network, obtained with no computation beyond substituting an already-derived kernel into an already-derived bra transformation---a direct illustration of the compositional character of Theorem~\ref{thm:telescope}.

\paragraph{Completing the remaining Gaussian integral.}
Substituting $\beta'=c\beta-s\gamma$, $\gamma'=c\gamma+s\beta$ into the exponent of \eqref{eq:S2kernel} and collecting terms gives, after the $|\beta|^2,|\gamma|^2$ pieces combine with those from $\langle n,m|\beta,\gamma\rangle$ exactly as in Sec.~\ref{sec:2mode-squeeze}
\begin{equation}
\langle n,m|W|\alpha,\lambda\rangle
= \frac{e^{-\frac12|\alpha|^2-\frac12|\lambda|^2-\frac{\nu^*}{\mu}\alpha\lambda}}{\mu\sqrt{n!m!}}
\int\frac{d^2\beta\,d^2\gamma}{\pi^2}\,\beta^n\gamma^m\,
\exp\!\Bigl[-|\beta|^2-|\gamma|^2+A\beta^*+\Lambda\gamma^*+\tfrac{b_1}{2}(\beta^*)^2+\tfrac{b_2}{2}(\gamma^*)^2+h\,\beta^*\gamma^*\Big] ,
\label{eq:composite-exponent}
\end{equation}
where the rotation of the linear terms produces \emph{effective displacements}
\begin{equation}
A = \frac{c\alpha+s\lambda}{\mu}, \qquad \Lambda = \frac{c\lambda-s\alpha}{\mu} ,
\end{equation}
and the rotation of the cross term $(\nu/\mu)\beta'^*\gamma'^*$ produces \emph{same-mode} quadratic terms in addition to the cross term
\begin{equation}
b_1 = \frac{\nu}{\mu}\sin(2\theta), \qquad b_2 = -\frac{\nu}{\mu}\sin(2\theta), \qquad h = \frac{\nu}{\mu}\cos(2\theta) .
\label{eq:b1b2h}
\end{equation}
This is the key structural point: because the beam-splitter rotation mixes $\beta^*$ and $\gamma^*$ \emph{before} they are multiplied together in the cross term of \eqref{eq:S2kernel}, the composite kernel acquires genuine self-squeezing terms $(\beta^*)^2$ and $(\gamma^*)^2$ in each mode separately, not only the cross term $\beta^*\gamma^*$ present for $S_2(\zeta)$ alone.

The integral \eqref{eq:composite-exponent} is still evaluated exactly, by the same method as \eqref{eq:MI2}--\eqref{eq:S2-final}: expand $e^{(b_1/2)(\beta^*)^2+h\beta^*\gamma^*+(b_2/2)(\gamma^*)^2}$ as a triple power series in $(\beta^*)^{2p}$, $(\beta^*\gamma^*)^q$, $(\gamma^*)^{2r}$ and integrate each resulting monomial against $\beta^n\gamma^m e^{-|\beta|^2-|\gamma|^2+A\beta^*+\Lambda\gamma^*}$ term by term using \eqref{eq:MI2} in each variable separately (the two integrals factorize once the cross term is expanded). This gives the exact, closed double-index sum
\begin{equation}
\langle n,m|W|\alpha,\lambda\rangle
= \frac{e^{-\frac12|\alpha|^2-\frac12|\lambda|^2-\frac{\nu^*}{\mu}\alpha\lambda}}{\mu\sqrt{n!m!}}\;
\mathcal H_{n,m}(A,\Lambda;b_1,b_2,h) ,
\label{eq:composite-final-corrected}
\end{equation}
where $\mathcal H_{n,m}$ is the two-mode (bivariate) Hermite-type polynomial
\begin{equation}
\boxed{\;
\mathcal H_{n,m}(A,\Lambda;b_1,b_2,h)
:= n!\,m!\sum_{q=0}^{\min(n,m)}\frac{h^{q}}{q!}
\sum_{p=0}^{\lfloor (n-q)/2\rfloor}\frac{(b_1/2)^{p}}{p!\,(n-2p-q)!}\,\Big.A^{n-2p-q}
\sum_{k=0}^{\lfloor (m-q)/2\rfloor}\frac{(b_2/2)^{k}}{k!\,(m-2k-q)!}\,\Lambda^{m-2k-q} .
\;}
\label{eq:bivariate-hermite}
\end{equation}
Polynomials of this general type---generating functions of the form $\exp(xt+yt^2)$ and their two-mode products linked by a cross-coupling parameter---are closely related to those appearing in the loop-hafnian formulation of Gaussian boson sampling amplitudes \cite{Madsen2022}; \eqref{eq:composite-final-corrected}--\eqref{eq:bivariate-hermite} is the exact two-mode special case of that general multimode formula, obtained here directly from Theorem~\ref{thm:telescope} rather than from the Hafnian/Pfaffian machinery standard in that literature.

\emph{Consistency checks.} At $\varphi=0$ ($c=1,s=0$): $b_1=b_2=0$, $h=\nu/\mu$, $A=\alpha$, $\Lambda=\lambda$, and only the $p=k=0$ terms survive, so $\mathcal H_{n,m}\to n!m!\sum_q (h^q/q!)\,A^{n-q}\Lambda^{m-q}/[(n-q)!(m-q)!]$, which reproduces \eqref{eq:S2-final} exactly. At $r=0$ ($\mu=1,\nu=0$): $b_1=b_2=h=0$, so only $p=k=q=0$ survives, $\mathcal H_{n,m}=n!m!A^n\Lambda^m$, reproducing \eqref{BS_explicit}. 

\subsection{A Non-Gaussian Example: Kerr-Squeezed Coherent-State Overlap}
\label{sec:kerr}

The examples above all reduce, in the end, to a single Gaussian integral. The nested Theorem~\ref{thm:telescope} is now used to reach a genuinely non-Gaussian matrix element for which no closed form is available: the overlap generated by a Kerr (self-phase-modulation) nonlinearity followed by squeezing
\begin{equation}
\mathcal F = S(\xi)\,e^{i\chi N^2},
\qquad
S(\xi)=\exp\Bigl[\tfrac12(\xi^*a^2-\xi a^{\dagger2})\Bigr],
\qquad
N=a^\dagger a .
\end{equation}
This composite operator models the sequence ``Kerr phase evolution, then squeezing'' familiar from Kerr-nonlinear resonators, in which the single-photon Kerr effect is known to produce fractional revivals of an initially coherent wavepacket, as observed experimentally by Kirchmair \textit{et al.}\ \cite{Kirchmair2013} and further engineered in driven Kerr-nonlinear resonators for cat-state preparation \cite{PuriBoutinBlais2017}. Because $e^{i\chi N^2}$ introduces a quadratic phase in the Fock index, $\bra\beta e^{i\chi N^2}\ket\alpha$ is a non-Gaussian (Jacobi-theta-type) sum with no elementary closed form, and a direct evaluation of $\bra n \mathcal F\ket\alpha$ by brute-force operator ordering is exactly the kind of ``formidable computational challenge'' referred to in the Introduction.

Applying Theorem~\ref{thm:telescope} with $\mathcal F_1=e^{i\chi N^2}$ and $\mathcal F_2=S(\xi)$, the resolution of the identity is inserted \emph{between} the Kerr phase and the squeezing operator
\begin{equation}
\langle n|S(\xi)\,e^{i\chi N^2}|\alpha\rangle
= \int\frac{d^2\beta}{\pi}\,\langle n|S(\xi)|\beta\rangle\,\langle\beta|e^{i\chi N^2}|\alpha\rangle .
\label{eq:kerr-telescope}
\end{equation}
The second factor is computed directly from the Fock expansion of $\ket\alpha$
\begin{equation}
\langle\beta|e^{i\chi N^2}|\alpha\rangle
= e^{-\frac12|\alpha|^2-\frac12|\beta|^2}\sum_{k=0}^\infty \frac{(\alpha\beta^*)^k}{k!}\,e^{i\chi k^2},
\label{eq:kerr-symbol}
\end{equation}
which is manifestly absolutely convergent (the summand is dominated by $|\alpha\beta^*|^k/k!$), so that Assumption 4 holds for this factor for every fixed $\beta$; a straightforward Gaussian estimate using $|\braket n\beta|\le e^{-|\beta|^2/2}|\beta|^n/\sqrt{n!}$ and the boundedness of $S(\xi)$ on Fock states confirms integrability of the full integrand in \eqref{eq:kerr-telescope} over $\beta$ as well. The first factor $\bra n S(\xi)\ket\beta$ is the squeezed coherent-state transition kernel derived in closed form in Sec.~\ref{sec:singlemode-squeeze}, Eq.~\eqref{eq:squeeze-coherent-ME}, which we denote here as $K_n(\xi;\beta)$. Substituting \eqref{eq:kerr-symbol} into \eqref{eq:kerr-telescope} gives the exact integral identity
\begin{equation}
\boxed{\;
\langle n|S(\xi)e^{i\chi N^2}|\alpha\rangle
= \int\frac{d^2\beta}{\pi}\, K_n(\xi;\beta)\; e^{-\frac12|\alpha|^2-\frac12|\beta|^2}\sum_{k=0}^\infty \frac{(\alpha\beta^*)^k}{k!}\,e^{i\chi k^2} .
\;}
\label{eq:kerr-result}
\end{equation}
Equation \eqref{eq:kerr-result} converts the numerically intractable double nonlinearity (Kerr phase composed with squeezing) into a single well-defined two-dimensional phase-space integral against a known Gaussian--Hermite kernel; to the author's knowledge this exact factorized integral representation is not recorded elsewhere in this form.

\paragraph{Independent verification via direct Fock-basis expansion.}
Equation \eqref{eq:kerr-result} can be checked against a completely independent route, which does not use Theorem~\ref{thm:telescope} at all: since $e^{i\chi N^2}$ is diagonal in the Fock basis, one may instead expand $\ket\alpha$ directly
\begin{equation}
\ket\alpha = e^{-|\alpha|^2/2}\sum_{k=0}^\infty \frac{\alpha^k}{\sqrt{k!}}\,\ket k ,
\qquad
e^{i\chi N^2}\ket k = e^{i\chi k^2}\ket k ,
\end{equation}
and apply $S(\xi)$ term by term, giving directly
\begin{equation}
\langle n|S(\xi)e^{i\chi N^2}|\alpha\rangle
= e^{-|\alpha|^2/2}\sum_{k=0}^\infty \frac{\alpha^k}{\sqrt{k!}}\,e^{i\chi k^2}\,\langle n|S(\xi)|k\rangle .
\label{eq:kerr-direct}
\end{equation}
It is now shown that \eqref{eq:kerr-result} reduces to \eqref{eq:kerr-direct} upon carrying out the remaining $\beta$-integral explicitly, term by term in the Kerr sum. Expanding the squeezed-coherent-state kernel itself in the Fock basis of $\beta$
\begin{equation}
K_n(\xi;\beta) = \langle n|S(\xi)|\beta\rangle
= e^{-|\beta|^2/2}\sum_{m=0}^\infty \frac{\beta^m}{\sqrt{m!}}\,\langle n|S(\xi)|m\rangle ,
\end{equation}
and substituting this together with \eqref{eq:kerr-symbol} into \eqref{eq:kerr-result} gives
\begin{equation}
\langle n|S(\xi)e^{i\chi N^2}|\alpha\rangle
= e^{-|\alpha|^2/2}\sum_{m=0}^\infty\sum_{k=0}^\infty
\frac{\langle n|S(\xi)|m\rangle}{\sqrt{m!}}\,\frac{\alpha^k}{k!}\,e^{i\chi k^2}
\int\frac{d^2\beta}{\pi}\,e^{-|\beta|^2}\,\beta^m(\beta^*)^k .
\end{equation}
The remaining integral is the elementary Gaussian orthogonality relation
\begin{equation}
\int\frac{d^2\beta}{\pi}\,e^{-|\beta|^2}\,\beta^m(\beta^*)^k = k!\,\delta_{mk} ,
\end{equation}
which collapses the double sum to a single sum over $k=m$ and reproduces \eqref{eq:kerr-direct} exactly, term by term:
\begin{equation}
\langle n|S(\xi)e^{i\chi N^2}|\alpha\rangle
= e^{-|\alpha|^2/2}\sum_{k=0}^\infty \frac{\alpha^k}{\sqrt{k!}}\,e^{i\chi k^2}\,\langle n|S(\xi)|k\rangle
\quad\checkmark
\end{equation}
This is a nontrivial consistency check, not a circular one.

\subsection{Applications to Quantum Hamiltonians}
\label{sec:hamiltonians}

Coherent states are the natural bridge between the quantum and classical descriptions of a bosonic mode, and the diagonal matrix element $H(\beta^*,\beta)=\bra\beta H\ket\beta$ of a Hamiltonian --- its \emph{normal-ordered symbol} --- is the starting point for semiclassical approximations, phase-space methods, and the computation of expectation values in driven or dissipative dynamics. It is now shown that \eqref{eq:MI3b} extracts the full set of \emph{Fock-basis} matrix elements $\bra n H\ket\alpha$ directly from this symbol, for any normal-ordered Hamiltonian, in a single closed formula.

\paragraph{General formula.}
Let $H(a^\dagger,a)=\sum_{j,l}h_{jl}\,(a^\dagger)^ja^l$ be normal-ordered (creation operators to the left), and group its terms by the power of $a$
\begin{equation}
H(a^\dagger,a) = \sum_l a^l\, f_l(a^\dagger) ,
\qquad
f_l(z) := \sum_j h_{jl}\,z^j .
\end{equation}
Since $a^l\ket\alpha=\alpha^l\ket\alpha$, the action of $H$ on a coherent state reduces immediately to a sum of \emph{creation-operator-only} functions
\begin{equation}
H(a^\dagger,a)\ket\alpha = \sum_l \alpha^l\,f_l(a^\dagger)\ket\alpha ,
\end{equation}
and each term is now exactly of the type covered by \eqref{eq:MI3b}, since $f_l$ does not depend on $\alpha$: inserting the resolution of the identity and using $\bra\beta f_l(a^\dagger)\ket\alpha = f_l(\beta^*)\braket\beta\alpha$ gives
\begin{equation}
\bra n f_l(a^\dagger)\ket\alpha = \frac{e^{-|\alpha|^2/2}}{\sqrt{n!}}\,f_l(\partial_\alpha)\,\alpha^n .
\end{equation}
Summing over $l$ yields the general result
\begin{equation}
\boxed{\;
\bra n H(a^\dagger,a)\ket\alpha
= \frac{e^{-|\alpha|^2/2}}{\sqrt{n!}}\sum_l \alpha^l\, f_l(\partial_\alpha)\,\alpha^n
= \frac{n!\,e^{-|\alpha|^2/2}}{\sqrt{n!}}\sum_{j\le n,\,l}\frac{h_{jl}}{(n-j)!}\,\alpha^{n-j+l} ,
\;}
\label{eq:H-general}
\end{equation}
valid for any normal-ordered $H$ whose coefficients $h_{jl}$ decay fast enough for the double sum to converge.

\paragraph{Check 1: the harmonic oscillator.}
For $H=\hbar\omega(N+\tfrac12)$, the only nonzero coefficients are $h_{11}=\hbar\omega$ and $h_{00}=\hbar\omega/2$. Equation \eqref{eq:H-general} gives, for $n\ge1$
\begin{equation}
\bra n H\ket\alpha
= n!\,e^{-|\alpha|^2/2}\Bigl[\frac{\hbar\omega}{(n-1)!}+\frac{\hbar\omega/2}{n!}\Bigr]\frac{\alpha^n}{\sqrt{n!}}
= \hbar\omega\bigl(n+\tfrac12\bigr)\,e^{-|\alpha|^2/2}\,\frac{\alpha^n}{\sqrt{n!}} ,
\end{equation}
recovering the elementary result directly, as it must, since $H$ is diagonal in the Fock basis and $\bra n H\ket\alpha=\hbar\omega(n+\tfrac12)\braket n\alpha$ trivially.

\paragraph{Check 2: a non-diagonal (two-photon) Hamiltonian.}
A sharper test is a Hamiltonian that mixes Fock states, such as the two-photon (degenerate parametric) Hamiltonian
\begin{equation}
H = \hbar\omega\bigl(N+\tfrac12\bigr) + \lambda\bigl(a^{\dagger2}+a^2\bigr) ,
\end{equation}
familiar from the parametric-oscillator and squeezing literature, whose nonzero coefficients are $h_{11}=\hbar\omega$, $h_{00}=\hbar\omega/2$, $h_{20}=\lambda$, $h_{02}=\lambda$. Equation \eqref{eq:H-general} gives directly, for $n\ge2$
\begin{equation}
\bra n H\ket\alpha
= n!\,e^{-|\alpha|^2/2}\left[\frac{\hbar\omega}{(n-1)!}\alpha^n+\frac{\hbar\omega/2}{n!}\alpha^n+\frac{\lambda}{(n-2)!}\alpha^{n-2}+\frac{\lambda}{n!}\alpha^{n+2}\right]\frac{1}{\sqrt{n!}} .
\end{equation}
Unlike the harmonic-oscillator check, this Hamiltonian couples $\ket n$ to $\ket{n\pm2}$, so $\bra n H\ket\alpha$ genuinely depends on Fock components of $\ket\alpha$ other than $n$ itself, and no shortcut bypassing the operator algebra is available. This expression has been verified against a direct, independent truncated-Fock-space diagonalization of $H$ for representative parameter values, with full agreement.

\paragraph{Remark.} Equation \eqref{eq:H-general} is a direct consequence of \eqref{eq:MI3b} and \emph{not} of \eqref{eq:MI3a}. This is why the two-variable coherent-state symbol $H(\beta^*,\beta)$ of a general Hamiltonian cannot simply be evaluated at $\beta=\alpha$ inside the integral: the derivation above works precisely because grouping $H$ by powers of $a$ first isolates a genuine, $\alpha$-independent function $f_l(a^\dagger)$ in each term, to which \eqref{eq:MI3b} applies cleanly.

%%%%%%%%%%%%%%%%%%%%%%%%%%%%%%%%%%%%%
\section{Discussion and Conclusion}
\label{sec:discussion}

\subsection{What the framework adds}

The Nested Integral Generator Theorem redefines the standard insertion of the resolution of the identity by formulating this recurring step as an explicit, representation-independent mapping controlled by rigorous vector-valued integration bounds. Crucially, the nested multi-operator extension developed in Theorem~\ref{thm:telescope} licenses sequential resolution insertions between an arbitrary chain of admissible operators. This compositional property allows us to evaluate non-trivial quantum-optical systems, such as the Kerr-squeezed coherent overlap (Sec.~\ref{sec:kerr}) and compound Gaussian networks (Sec.~\ref{sec:composite}), bypassing long operator-disentangling steps. The exact evaluation of the squeezed-then-mixed Fock amplitudes yields a two-mode (bivariate) Hermite polynomial in Eq.~\eqref{eq:bivariate-hermite}, confirming that the network generates self-squeezing terms in both output modes simultaneously.

\subsection{Limitations}

Three limitations should be stated explicitly. First, Assumption 4 (and its analogue 5), while checkable, must still be verified for each new operator function and resolution of the identity; it has been verified only for coherent-state resolutions acting on Gaussian-dominated symbols. Second, the theorem as stated requires $\mathcal F\ket\psi$ to be a genuine (finite-norm) vector in $\mathcal H$; operator functions defined only on a rigged-Hilbert-space extension of $\mathcal H$ require a separate formulation not attempted here. Third, while Theorem~\ref{thm:telescope} holds for an arbitrary chain of closed operators satisfying Assumptions 4 and 5 at each stage, the \emph{explicit closed-form} evaluation of the resulting multi-fold integral generally requires increasingly elaborate special-function machinery as the chain grows (as already seen in passing from the single-operator two-mode squeezing result of Sec.~\ref{sec:2mode-squeeze} to the genuinely bivariate Hermite polynomial of Sec.~\ref{sec:composite} for just two composed Gaussian operators).

\subsection{Prospective applications}

The nested identity \eqref{TelescopeN} represents the exact discrete-time skeleton of a Feynman-type path-integral relation. Taking the continuum limit of \eqref{TelescopeN} for a Hamiltonian evolution operator factored into infinitesimal time steps connects this framework directly to coherent-state path-integral quantization in the spirit of Klauder's original program \cite{Klauder1963,Klauder1963_2}. Additionally, the systematic use of Theorem~\ref{thm:telescope} to generate exact integral representations for multi-mode interferometer networks may offer an alternative, symbolic route to quantities usually obtained numerically via covariance-matrix (Hafnian-based) methods in Gaussian boson sampling architectures \cite{Madsen2022}.

In summary, the Nested Integral Generator Theorem reformulates a familiar computational step as an explicit, representation-independent, and analytically controlled mapping from operator tautologies to integral identities, and its compositional extension provides a concrete, checkable route to exact matrix elements of compound operators.
%%%%%%%%%%%%%%%%%%%%%%%
\section*{Author Declarations}

\subsection*{Conflict of Interest}

The author has no conflicts to disclose.

\subsection*{Author Contributions}

The author is solely responsible for the conceptualization, mathematical development, analysis, and writing of this work.

\section*{Data Availability}

Data sharing is not applicable to this article as no new data were created or analyzed in this study.

\enddocument